\begin{document}

\title{Gaussian Variant of Freivalds' Algorithm for Efficient and Reliable Matrix Product Verification%\thanks{Grants or other notes
%about the article that should go on the front page should be
%placed here. General acknowledgments should be placed at the end of the article.}
}
%\subtitle{Do you have a subtitle?\\ If so, write it here}

\titlerunning{GVFA for Efficient and Reliable Matrix Product Verification}        % if too long for running head

\author{Hao Ji         \and
        Michael Mascagni \and
        Yaohang Li
}

%\authorrunning{Short form of author list} % if too long for running head

\institute{Hao Ji  \at
               Department of Computer Science \\
              Old Dominion University \\
              \email{hji@cs.odu.edu}   
           \and
           Michael Mascagni \at
             Departments of Computer Science, Mathematics, and Scientific Computing \\
             Florida State University \\
             Applied and Computational Mathematics Division\\
             National Institute of Standards and Technology\\
             \email{mascagni@fsu.edu}   
           \and
           Yaohang Li \at
              Department of Computer Science \\
              Old Dominion University \\
              Tel.: 757-683-7721\\
              Fax: 757-683-4900\\
              \email{yaohang@cs.odu.edu}           %  \\
%             \emph{Present address:} of F. Author  %  if needed
}

\date{Received: date / Accepted: date}
% The correct dates will be entered by the editor

\maketitle

\begin{abstract}
In this article, we consider the general problem of checking the correctness of matrix multiplication. Given three $n \times n$ matrices $A$, $B$, and $C$, the goal is to verify that $A \times B=C$ without carrying out the computationally costly operations of matrix multiplication and comparing the product $A \times B$ with $C$, term by term.  This is especially important when some or all of these matrices are very large, and when the computing environment is prone to soft errors. Here we extend Freivalds' algorithm to a Gaussian Variant of Freivalds' Algorithm (GVFA) by projecting the product $A \times B$ as well as $C$ onto a Gaussian random vector and then comparing the resulting vectors. The computational complexity of GVFA is consistent with that of Freivalds' algorithm, which is $O(n^{2})$. However, unlike Freivalds' algorithm, whose probability of a false positive is $2^{-k}$, where $k$ is the number of iterations.  Our theoretical analysis shows that when $A \times B \neq C$, GVFA produces a false positive on set of inputs of measure zero with exact arithmetic.  When we introduce round-off error and floating point arithmetic into our analysis, we can show that the larger this error, the higher the probability that GVFA avoids false positives. Moreover, by iterating GVFA $k$ times, the probability of a false positive decreases as $p^k$, where $p$ is a very small value depending on the nature of the fault on the result matrix and the arithmetic system's floating-point precision. Unlike deterministic algorithms, there do not exist any fault patterns that are completely undetectable with GVFA. Thus GVFA can be used to provide efficient fault tolerance in numerical linear algebra, and it can be efficiently implemented on modern computing architectures.  In particular, GVFA can be very efficiently implemented on architectures with hardware support for fused multiply-add operations.
\keywords{Fault-tolerance \and Algorithmic Resilience \and Gaussian Variant of Freivalds' Algorithm \and Matrix Multiplication \and Gaussian Random Vector \and Failure Probability}
% \PACS{PACS code1 \and PACS code2 \and more}
\subclass{65F99 \and 65C05 \and 62P99}
\end{abstract}

\section{Introduction}
\label{section1}
As the demands on modern linear algebra applications created by the latest development of high-performance computing (HPC) architectures continues to grow, so does the likelihood that they are vulnerable to faults. Faults in computer systems are usually characterized as hard or soft, and in this article we are motivated primarily with the latter. Soft errors, defined by intermittent events that corrupt the data being processed, are among the most worrying, particularly when the computation is carried out in a low-voltage computing environment. For example, the $2,048$-node ASC Q supercomputer at Los Alamos National Laboratory reports an average of $24.0$ board-level cache tag parity errors and $27.7$ CPU failures per week \cite{michalak2005predicting}; the $131,072$-CPU BlueGene/L supercomputer at Lawrence Livermore National Laboratory experiences one soft error in its $L1$ cache every $4\textendash 6$ hours \cite{glosli2007extending}; more recently, a field study on {\it Google}'s servers reported an average of $5$ single bit errors occur in $8$ Gigabytes of RAM per hour using the top-end error rate \cite{schroeder2011dram}. The reliability of computations on HPC systems can suffer from soft errors that occur in memory, cache, as well as microprocessor logic \cite{shivakumar2002modeling}, and thus produce potentially incorrect results in a wide variety of ways. We are specifically interested in examining ways to remedy the consequences of soft errors for certain linear algebra applications.

Matrix-matrix multiplication is one of the most fundamental numerical operations in linear algebra. Many important linear algebraic algorithms, including linear solvers, least squares solvers, matrix decompositions, factorizations, subspace projections, and eigenvalue/singular values computations, rely on the casting the algorithm as a series of matrix-matrix multiplications. This is partly because matrix-matrix multiplication is one of the level-3 Basic Linear Algebra Subprograms (BLAS) \cite{dongarra1990algorithm,demmel1992stability,gallivan1987use}. Efficient implementation of the BLAS remains an important area for research, and often computer vendors spend significant resources to provide highly optimized versions of the BLAS for their machines.
%%Can you add a couple of references on the BLAS?
Therefore, if a matrix-matrix multiplication can be carried out free of faults, the linear algebraic algorithms that spend most of their time in matrix-matrix multiplication can themselves be made substantially fault-tolerant \cite{gunnels2001fault}.  Moreover, there is considerable interest in redesigning versions of the BLAS to be more fault-tolerant, and this work will certainly contribute to that goal.

In this article, we consider the general problem of checking the correctness of matrix-matrix multiplication, i.e., given three $n \times n$ matrices $A$, $B$, and $C$, we want to verify whether $A \times B=C$. In contrast to the best known matrix-matrix multiplication algorithm running in $O(n^{2.3727})$ time \cite{coppersmith1987matrix,Williams:2012:MMF:2213977.2214056}, Freivalds' algorithm \cite{freivalds1977probabilistic} takes advantage of randomness to reduce the time to check a matrix multiplication to $O(n^2)$. The tradeoff of Freivalds' algorithm is that the probability of failure detection, a false positive, is $2^{-k}$, where $k$ is the number of iterations taken. We extend Freivalds' algorithm from using binary random vectors to floating-point vectors by projecting the $A \times B$ result as well as $C$ using Gaussian random vectors. We will refer to this algorithm as the Gaussian Variant of Freivalds' Algorithm (GVFA). By taking advantage of a nice property of the multivariate normal distribution, we show that GVFA produces a false positive on a set of random Gaussian vectors and input matrices of measure zero. Taking floating point round-off error into account, by iterating GVFA $k$ times, the probability of false positive decreases exponentially as $p^k$, where $p$ is usually a very small value related to the magnitude of the fault in the result matrix and floating-point precision of the computer architecture. We also present an efficient implementation of GVFA on computing hardware supporting fused multiplication-add operations.

The plan of the paper is the following. We first discuss two relevant algorithms from the literature for error detection in matrix-matrix multiplication. These are the Huang-Abraham scheme, discussed in section \ref{section2}, and Freivalds' algorithm, the subject of section \ref{section3}. The former is a deterministic algorithm based on carrying row and column sums along in a clever format to verify correct matrix-matrix multiplication. Freivalds' algorithm is a random projection of the computed matrix-matrix product to the same random projection of the matrix-matrix product recomputed from the original matrices using only matrix-vector multiplication. The random vector used in Freivalds' algorithm is composed of $0$'s and $1$'s. In section \ref{section4}, we present the GVFA, a variation on Freivalds' algorithm, where we instead use random Gaussian vectors as the basis of our projections. We analyze the GVFA and prove that with Gaussian vectors, a false positive occurs only on a set of Gaussian vectors of measure zero. Further analysis of false positive probabilities in the GVFA in the presence of floating-point arithmetic with round-off errors is then taken. Finally, in section \ref{section5} we provide a discussion of the results and implications for fault-tolerant linear algebraic computations and a method of enhancing the resilience of linear algebraic computations. In addition, in this final section we provide conclusions and suggest directions for future work.

\section{The Huang-Abraham Scheme and its Limit in Error Detection/ Correction}
\label{section2}
The Huang-Abraham scheme \cite{huang1984algorithm} is an algorithm-based fault tolerance method that simplifies detecting and correcting errors when carrying out matrix-matrix multiplication operations.  This is slightly different from the matrix product verification problem. The fundamental idea of the Huang-Abraham scheme is to address the fault detection and correction problem at the algorithmic level by calculating matrix checksums, encoding them as redundant data, and then redesigning the algorithm to operate on these data to produce encoded output that can be checked. Compared to traditional fault tolerant techniques, such as checkpointing \cite{bosilca2009algorithm}, the overhead of storing additional checksum data in the Huang-Abraham scheme is small, particularly when the matrices are large. Moreover, no global communication is necessary in the Huang-Abraham scheme \cite{elnozahy2002survey}. The Huang and Abraham scheme formed the basis of many subsequent detection schemes, and has been extended for use in various HPC architectures \cite{banerjee1986bounds,banerjee1990algorithm,luk1988analysis,elnozahy2002survey}.

\begin{figure}[!ht]
\centering
\subfigure[Generation of a column checksum for $A$ and a row checksum for $B$, and multiplication of the extended matrices to produce the checksum matrix for $C$]{ %
 \includegraphics[scale=0.7]{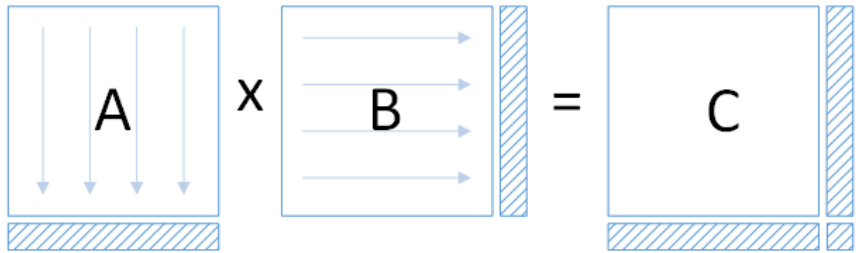}
 \label{fig:fig1a}
}% \\

\subfigure[Mismatches in the row and column checksums indicate an element fault in the matrix product]{ %
\includegraphics[scale=0.7]{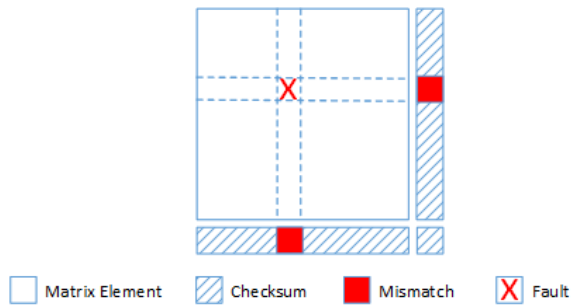}
\label{fig:fig1b}
}%

\caption{The Huang-Abraham scheme for detecting faults in matrix-matrix multiplication}
\label{fig:fig1}       % Give a unique label
\end{figure}

Fig.~\ref{fig:fig1} illustrates the Huang-Abraham scheme \cite{huang1984algorithm} for detecting faults in matrix-matrix multiplication. First of all, column sums for $A$ and row sums for $B$ are generated and are added to an augmented representation of $A$ and $B$. These are treated as particular checksums in the subsequent multiplication. Then, multiplication of the extended matrices produces the augmented matrix for $C$ (Fig.~\ref{fig:fig1a}) where the checksums can be readily compared. Mismatches in the row and column checksums indicate an element fault in the matrix product, $C$ (Fig.~\ref{fig:fig1b}).

However, there are certain patterns of faults undetectable by the Huang-Abraham scheme. Here is a simple $2 \times 2$ example to illustrate such an undetectable pattern.

Consider the matrices
\begin{eqnarray*}
A=\left[\begin{array}{cc}2&3\\3&4\end{array}\right], B=\left[\begin{array}{cc}1&-6\\1&6\end{array}\right], \mbox{ and} \enspace  C=\left[\begin{array}{cc}5&6\\7&6\end{array}\right].
\end{eqnarray*}
Clearly $A \times B=C$ holds in this example. Then we use the Huang-Abraham scheme to calculate the column checksum for $A$ and row checksum for $B$ and we can get
$$A_{F}=\left[\begin{array}{cc}2&3\\3&4\\5&7\end{array}\right] \mbox{ and} \enspace B_{F}=\left[\begin{array}{ccc}1&-6&-5\\1&6&7\end{array}\right].$$
Then
$$A_{F} \times B_{F}=\left[\begin{array}{ccc}5&6&11\\7&6&13\\12&12&24\end{array}\right]=C_{F}.$$
However, if there is a fault during the computation of $C$ which causes an exchange of the first and second columns, an erroneous result matrix $C'=\left[\begin{array}{cc}6&5\\6&7\end{array}\right]$ is generated by exchanging the columns of $C$. Column or row exchange, usually caused by address decoding faults \cite{van1991testing}, is a commonly observed memory fault pattern \cite{cheng2003fame}. The problem is that the checksum matrix of $C'$ becomes ${C'}_{F}=\left[\begin{array}{ccc}6&5&11\\6&7&13\\12&12&24\end{array}\right]$, where both the row and column checksums match those of the true product of $A \times B$. Consequently, the Huang-Abraham scheme fails to detect this fault.

The Huang-Abraham scheme can be viewed as a linear constraint satisfaction problem (CSP), where the variables are the $n^2$ entries in the product matrix, $C$, the constraints are the $2n$ row and column checksums.  Also, the $2n \times n^{2}$ coefficient matrix in the under-determined linear CSP system equation specifies the selection of row or column elements, as shown in Fig. \ref{fig:fig2}. Clearly, a product matrix, $C$, that does not satisfy the CSP equations indicates errors in $C$ detectable by the Huang-Abraham scheme. The unique, correct product matrix, $C$, satisfies the CSP equations. Nevertheless, other possible product matrices satisfying the CSP equations are the fault patterns undetectable by the Huang-Abraham scheme. Only when at least $n^2$ constraints with different element selection are incorporated so that the rank of the coefficient matrix in the CSP equation is $n^2$, can the undetectable fault patterns be eliminated. However, this situation is equivalent to simply checking every element in $C$.

%\begin{figure}[!ht]
%\centering
%\captionsetup{justification=centering}
%\includegraphics[scale=0.4]{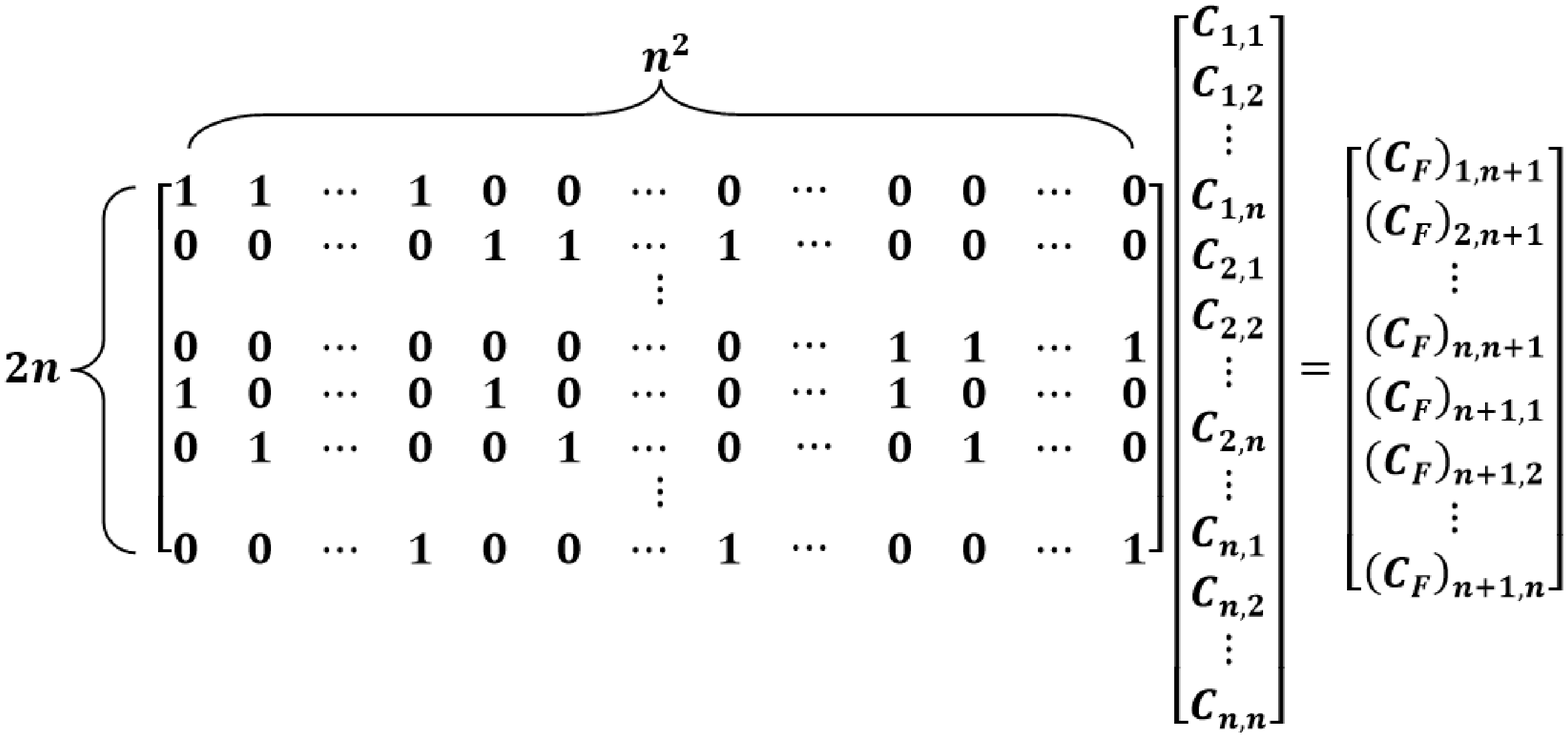}
%\caption{Under-determined CSP system in the Huang-Abraham Scheme}
%\label{fig:fig2}       % Give a unique label
%\end{figure}

\begin{figure}[!ht]
% Use the relevant command to insert your figure file.
% For example, with the graphicx package use
  \includegraphics[scale=0.4]{fig2.eps}
% figure caption is below the figure
\caption{Under-determined CSP system in the Huang-Abraham Scheme}
\label{fig:fig2}       % Give a unique label
\end{figure}

It is important to notice that there are an infinite number of existing fault patterns that satisfy the checksum constraints and thus are undetectable by the Huang-Abraham scheme, even in the above simple $2 \times 2$ example (the rank of the CSP coefficient matrix is $3$). Moreover, as dimension, $n$, increases, the number of checksum constraints increases only linearly but the number of elements in a matrix has quadratic growth. Therefore, the undetectable patterns in the Huang-Abraham scheme increase quadratically with $n$. As a result, for multiplications in large matrices, fault detection methods based on the Huang-Abraham scheme can generate false positive results for a large number of circumstances.

\section{Freivalds' Algorithm}
\label{section3}
The fault detection methods based on the Huang-Abraham scheme are deterministic algorithms. As many randomized fault tolerance algorithms \cite{Li02grid-basedmonte,li2003analysis}, with the tradeoff of random uncertainty, Freivalds \cite{freivalds1977probabilistic} showed that a probabilistic machine can verify the correctness of a matrix product faster than direct recalculation. The procedure of the corresponding method, later named Freivalds' algorithm, is described in Algorithm 1.

\begin{table}[!ht]
\label{tab:alg1}
\begin{center}
\begin{tabular}{|p{10cm}|}
\hline
\multicolumn{1}{|c|}{Algorithm 1: Freivalds' Algorithm}\\

\item[1.]     Randomly sample a vector $\omega \in\{0,1\}^n$ with $p =\frac{1}{2}$ of $0$ or $1$. \\
\item[2.]     Calculate the projection of $C$ onto $\omega$: $C\omega=C \times \omega$. \\
\item[3.] Calculate the projection of the product $A\times B$ onto $\omega$: $AB\omega=A\times (B \times \omega)$. \\
\hline
\end{tabular}
\end{center}
\end{table}

Obviously, if $A \times B=C$, $C \omega = AB\omega$ always holds. Freivalds proved that when $A \times B \neq C$, the probability of $C\omega=AB\omega$ is less than or equal to $\frac{1}{2}$. The running time of the above procedure is $O(n^2)$ with an implied multiplier of $3$, as it is comprised of three matrix-vector multiplications.  This is an upper bound as one can perhaps optimize the evaluation of $B\omega$ and $C\omega$. By iterating the Freivalds' algorithm $k$ times, the running time becomes $O(kn^2)$ and the probability of a false positive becomes less than or equal to $2^{-k}$, according to the one-sided error. More generalized forms of Freivalds' algorithm have also been developed, mainly based on using different sampling spaces \cite{chinn1993bounds,alon1990simple,naor1993small,lisboa2007low}. Given at most $p$ erroneous entries in the resulted matrix product, Gasieniec, Levcopoulos, and Lingas extended Freivalds' algorithm to one with correcting capability running in $O(\sqrt{p}n^2\log(n)\log(p))$ time \cite{gkasieniec2014efficiently}.
%% cite Efficiently correcting matrix products

\section{A Gaussian Variant of Freivalds' Algorithm (GVFA)}
\label{section4}
%%We have said nothing about using integer matrices, so I am renaming this subsection
%\subsection{Extended Freivalds' Algorithm for Real Matrices}
\subsection{Extending Freivalds' Algorithm using Gaussian Vectors}
\label{subsection4.1}
Freivalds' original algorithm, and most of its extensions are based on integer matrices or matrices over a ring and sampling from discrete spaces. Clearly, we can also apply Freivalds' algorithm to matrices with real or complex entries with the random vector remaining zeros and ones. A simple extension is to project $A \times B$ and $C$ onto a vector $\omega_P$ of form $\omega_P=(1, r, r^2, ..., r^{n-1})^T$, where $r$ is a random real number. A false positive occurs only when $r$ is the root of the corresponding polynomial. However, in practice, $r^{n-1}$ can easily grow too large or small exceeding floating point representation \cite{korec2014deterministic}. 
%% cite Deterministic verification of integer matrix multiplication in quadratic time

Here we also extend Freivalds' algorithm by using Gaussian random vectors for the projection. We use the fact that the multivariate normal distribution has several nice properties \cite{muirhead2009aspects}, which have been used for detecting statistical errors in distributed Monte Carlo computations \cite{li2003analysis}. The extended algorithm is described in Algorithm 2.

\begin{table}[!ht]
\label{tab:alg2}
\begin{center}
\begin{tabular}{|p{10cm}|}
\hline
\multicolumn{1}{|c|}{Algorithm 2: Gaussian Variant of Freivalds' Algorithm}\\

\item[1.] Generate a Gaussian random vector, $\omega_G$, made up of $n$ independent (but not necessarily identically) distributed normal random variables with finite mean and variance.\\
\item[2.] Calculate the projection of C on $\omega_G$: $C\omega_G=C \times \omega_G$.\\
\item[3.] Calculate the projection of product $A \times B$ on $\omega_G$: $AB\omega_G=A \times (B \times \omega_G)$. \\
\hline
\end{tabular}
\end{center}
\end{table}

This algorithm, which we call a Gaussian variant of Freivalds algorithm (GVFA), requires three matrix-vector multiplications and only one vector comparison for fault detection.

\subsection{Theoretical Justification}
\label{subsection4.2}
Similar to Freivalds' algorithm, in GVFA if $A \times B=C$, $C\omega_G=AB\omega_G$ always holds within a certain floating point round-off threshold. When $A \times B \neq C$, the chance that $C\omega_G=AB\omega_G$ is a false positive event occurs with measure zero in exact arithmetic, as shown in Theorem \ref{theorem1}.
We first state a result of Lukacs and King \cite{lukacs1954property}, shown as Proposition \ref{proposition1}, which will be used in the proof of Theorem \ref{theorem1}. The main assumption of Proposition \ref{proposition1} is the existence of the $n$th moment of each random variable, which many distributions, particularly the normal distribution, have. One important exception of the normal is that it is the limiting distribution for properly normalized sums of random variables with two finite moments. This is Lindeberg's version of the Central Limit Theorem \cite{lindeberg1922neue}. 
%%Can you find the reference for this?

\vspace*{\baselineskip}

\begin{proposition}
\label{proposition1}
Let $X_1, X_2, \cdots,X_n$ be $n$ independent (but not necessarily identically) distributed random variables with variances $\sigma_{i}^2$, and assume that the $n$th moment of each $X_{i}$  $\left( i=1,2,\cdots,n \right)$ exists and is finite. The necessary and sufficient conditions for the existence of two statistically independent linear forms $Y_1=\sum_{i=1}^{n}{a_{i}X_{i}}$ and $Y_2=\sum_{i=1}^{n}{b_{i}X_{i}}$ are
\renewcommand\labelenumi{(\theenumi)}
\begin{enumerate}
\item Each random variable which has a nonzero coefficient in both forms is normally distributed.
\item $\sum_{i=1}^{n}{a_{i}b_{i} \sigma_{i}^2}=0$.
\end{enumerate}
\end{proposition}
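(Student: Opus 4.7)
My plan is to prove Proposition~\ref{proposition1} in two directions separately. Sufficiency is a decomposition argument exploiting the rigidity of the bivariate normal (uncorrelated $\Rightarrow$ independent); necessity is the harder direction and proceeds through a functional equation for the characteristic functions of the $X_i$.

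For sufficiency, partition the index set $\{1,\dots,n\}$ into $I_a = \{i : a_i \neq 0,\, b_i = 0\}$, $I_b = \{i : a_i = 0,\, b_i \neq 0\}$, and $I_{ab} = \{i : a_i \neq 0,\, b_i \neq 0\}$ (indices with $a_i = b_i = 0$ contribute nothing). Writing $U_a = \sum_{i \in I_a} a_i X_i$, $U_b = \sum_{i \in I_b} b_i X_i$, $W_a = \sum_{i \in I_{ab}} a_i X_i$, and $W_b = \sum_{i \in I_{ab}} b_i X_i$, one has $Y_1 = U_a + W_a$ and $Y_2 = U_b + W_b$. Since the $X_i$ are mutually independent and the three index groups are disjoint, $U_a$, $U_b$, and the pair $(W_a, W_b)$ are mutually independent. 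By hypothesis (1), each $X_i$ with $i \in I_{ab}$ is normal, so $(W_a, W_b)$ is jointly Gaussian; its covariance $\sum_{i \in I_{ab}} a_i b_i \sigma_i^2$ agrees with $\sum_i a_i b_i \sigma_i^2$ (the $I_a$ and $I_b$ contributions vanish), which is zero by hypothesis (2). Uncorrelated jointly Gaussian variables are independent, so $W_a \perp W_b$, making $U_a, U_b, W_a, W_b$ mutually independent; therefore $Y_1$ (a function of $U_a, W_a$) is independent of $Y_2$ (a function of $U_b, W_b$).

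For necessity, assume $Y_1 \perp Y_2$ and let $\phi_i(t) = E[e^{itX_i}]$. Independence and the independence of the $X_i$ combine to give, on a neighborhood of the origin,
\begin{equation*}
\prod_{i=1}^n \phi_i(a_i t_1 + b_i t_2) \;=\; \prod_{i=1}^n \phi_i(a_i t_1)\cdot \prod_{i=1}^n \phi_i(b_i t_2).
\end{equation*}
Continuity and $\phi_i(0)=1$ let us take branch-chosen logarithms to obtain the additive functional equation
\begin{equation*}
\sum_{i=1}^n \psi_i(a_i t_1 + b_i t_2) \;=\; \sum_{i=1}^n \psi_i(a_i t_1) + \sum_{i=1}^n \psi_i(b_i t_2), \qquad \psi_i := \log\phi_i.
\end{equation*}
By the $n$th-moment hypothesis, each $\psi_i$ admits a Taylor expansion at $0$ through order $n$ whose coefficients are proportional to the cumulants $\kappa_k^{(i)}$. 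Matching mixed coefficients of $t_1^{k-j} t_2^j$ for $1 \le j \le k-1$ yields
\begin{equation*}
\sum_{i=1}^n \kappa_k^{(i)}\, a_i^{k-j} b_i^{\,j} \;=\; 0, \qquad 2 \le k \le n,\ \ 1 \le j \le k-1.
\end{equation*}
At $k=2,\ j=1$ this is exactly condition (2), since $\kappa_2^{(i)}=\sigma_i^2$.

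The main obstacle is concluding (1), namely that every $X_i$ with $i \in I_{ab}$ is truly normal rather than merely satisfying the finite cumulant constraints above; finitely many vanishing cumulants do not in general characterise the normal law. The resolution, in the spirit of the classical Darmois--Skitovich theorem, is to use the \emph{full} functional equation, not just its Taylor coefficients. Differentiating once in each variable gives $\sum_i a_i b_i \psi_i''(a_i t_1 + b_i t_2) = 0$ on a neighborhood of the origin; since the values $a_i t_1 + b_i t_2$ for $i \in I_{ab}$ trace out independent directions as $(t_1,t_2)$ varies, suitable substitutions of the form $t_1 = s/a_j,\ t_2 = 0$ or $t_2 = s/b_j$ produce a linear system forcing each $\psi_i''$ with $i \in I_{ab}$ to be constant on a neighborhood of $0$. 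Hence each such $\psi_i$ is a polynomial of degree at most two, so $\phi_i = e^{\psi_i}$ is a Gaussian characteristic function (by Marcinkiewicz's theorem, $e^{P}$ with $P$ a polynomial requires $\deg P \le 2$), establishing condition (1) and completing the proof.
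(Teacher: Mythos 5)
The paper offers no proof of Proposition~\ref{proposition1}: it is quoted as a known result of Lukacs and King \cite{lukacs1954property}, so your attempt can only be measured against the classical argument, not against anything in the paper. Your sufficiency direction is correct and complete: the partition into $I_a$, $I_b$, $I_{ab}$, the joint Gaussianity of $(W_a,W_b)$, and uncorrelated-jointly-Gaussian-implies-independent is exactly the standard argument. The necessity of condition (2) is also fine, though it follows even more directly from $\mathrm{Cov}(Y_1,Y_2)=\sum_{i=1}^n a_i b_i \sigma_i^2 = 0$, without any cumulant expansion.

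The genuine gap is in the necessity of condition (1), at precisely the hard point you flag and then wave past. From $\sum_i a_i b_i \psi_i''(a_i t_1 + b_i t_2) = 0$ you claim that substitutions such as $t_1 = s/a_j,\ t_2 = 0$ ``produce a linear system forcing each $\psi_i''$ with $i \in I_{ab}$ to be constant.'' But such a substitution yields a single equation in which all the unknown functions still appear, evaluated at different points $a_i s/a_j$; it isolates nothing. Worse, the step cannot be repaired by cleverer substitutions alone: when two indices have proportional coefficient pairs ($a_i b_j = a_j b_i$, e.g.\ $Y_1 = X_1 + X_2 - X_3$, $Y_2 = X_1 + X_2 + X_3$ with suitable variances), the forms $a_i t_1 + b_i t_2$ and $a_j t_1 + b_j t_2$ are proportional, so $\psi_i''$ and $\psi_j''$ enter every identity derivable from the functional equation only through a fixed combination evaluated at proportional arguments. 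No linear algebra separates them; the most you can conclude is that the grouped sum $a_i X_i + a_j X_j$ has quadratic log-characteristic function, and splitting that into normality of each summand requires Cram\'er's decomposition theorem, a deep result your sketch never invokes. In the non-degenerate distinct-ratio case the separation is done by repeated differentiation (or finite differences) in $t_2$, producing a Vandermonde system in the ratios $b_i/a_i$ --- and this is exactly where the hypothesis that the $n$th moment exists, with $n$ equal to the number of variables, is consumed; your outline uses the $n$th-moment assumption only in the ancillary cumulant computation, which is a telltale sign the hard step is missing. Two smaller defects: constancy of $\psi_i''$ on a neighborhood of $0$ only gives a characteristic function agreeing with a Gaussian locally, and the clean way to conclude global normality is that all derivatives of $\phi_i$ at $0$ then match Gaussian moments and the normal law is moment-determinate; Marcinkiewicz's theorem, as you cite it, requires $\phi_i = e^{P}$ \emph{globally}, which you have not established, and is in any case superfluous once $\deg P \le 2$ is known.
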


\vspace*{\baselineskip}

\begin{theorem}
\label{theorem1}
If $A \times B \neq C$, the set of Gaussian vectors where $C\omega_G=AB\omega_G$ holds in Algorithm 2 has measure zero.
\end{theorem}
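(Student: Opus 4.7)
The plan is to reduce the statement to the one-line fact that a nondegenerate normal random variable assigns probability zero to any single point. Let $D = AB - C$; then the event $C\omega_G = AB\omega_G$ is exactly $\{D\omega_G = 0\}$, and since $AB \neq C$ the matrix $D$ is nonzero, so I can fix an index $i$ for which the $i$-th row $d_i^\top$ of $D$ has at least one nonzero entry $D_{i,j_0}$.

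Next, I examine the scalar random variable $L_i = (D\omega_G)_i = \sum_{j=1}^{n} D_{ij}\,\omega_{G,j}$. Because the entries of $\omega_G$ are independent normal variates with finite (and strictly positive) variances $\sigma_j^2$, Proposition \ref{proposition1} and the classical additive structure behind it imply that $L_i$ is itself normally distributed with variance $\sigma_{L_i}^2 = \sum_{j=1}^{n} D_{ij}^2 \sigma_j^2 \geq D_{i,j_0}^2\,\sigma_{j_0}^2 > 0$. A nondegenerate one-dimensional normal distribution is absolutely continuous with respect to Lebesgue measure on $\mathbb{R}$, so $P(L_i = 0) = 0$. The inclusion $\{D\omega_G = 0\} \subseteq \{L_i = 0\}$ then yields $P(D\omega_G = 0) = 0$, which is exactly the claim that the bad set of Gaussian vectors has measure zero.

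The main obstacle, such as it is, is more notational than mathematical. I must be careful to state the conclusion with respect to the joint Gaussian measure on $\mathbb{R}^n$ (equivalently, with respect to $n$-dimensional Lebesgue measure, since the product Gaussian density is strictly positive on $\mathbb{R}^n$), and to read ``normal random variable with finite variance'' as having strictly positive variance, so that every coordinate of $\omega_G$ is genuinely random rather than a deterministic constant that could conceivably cancel a nonzero coefficient in $d_i^\top$. The appeal to Proposition \ref{proposition1} is modest: its full statement concerning independence of a \emph{pair} of linear forms is not needed for Theorem \ref{theorem1} itself, but the fact that a linear combination of independent normals is again normal with the additive-variance formula is exactly the structural input that makes the single-row argument work, and I expect the two-form independence characterization to resurface when the authors later quantify false-positive probabilities under floating-point round-off.
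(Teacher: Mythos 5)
Your proof is correct, and it takes a genuinely different and more economical route than the paper's. The paper builds an orthonormal basis $v_1,\dots,v_n$ adapted to $null(\Delta)$, rotates $\omega_G$ into that basis, invokes the Lukacs--King characterization (Proposition \ref{proposition1}) to argue that all $n$ coordinates $\delta_i = (V\omega_G)_i$ are independent normals, and then observes that the event $\delta_{n-r+1}=\cdots=\delta_n=0$ has measure zero. You instead restrict attention to a single nonzero row $d_i^\top$ of $\Delta$, note that $L_i = d_i^\top\omega_G$ is a nondegenerate one-dimensional normal (a linear combination of independent normals with at least one nonzero coefficient and positive variances), and conclude from $\{\Delta\omega_G=0\}\subseteq\{L_i=0\}$ and absolute continuity that the bad set is null. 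What your approach buys: it needs only the elementary closure of the normal family under independent linear combination, not the two-form independence characterization (as you yourself note, your citation of Proposition \ref{proposition1} is essentially decorative); it works verbatim when the coordinates of $\omega_G$ have unequal variances, a case where the paper's appeal to Proposition \ref{proposition1} is actually delicate, since orthogonality of the rows of $V$ gives $\sum_i a_i b_i = 0$ but not the required $\sum_i a_i b_i \sigma_i^2 = 0$ unless the variances are equal; and it sidesteps the basis construction entirely. What the paper's approach buys is a finer picture --- it exhibits $r=rank(\Delta)$ independent scalar constraints rather than one, which foreshadows the quantitative false-positive bounds of Theorem \ref{theorem2} --- but for the bare measure-zero conclusion your single-row argument suffices. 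Your caveat that ``finite variance'' must be read as ``strictly positive variance'' is well taken and applies equally to the paper's own proof.
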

\begin{proof}
Let the matrix $ \Delta \in\mathbb{R}^{n \times n}$ denote $ AB-C$.  Since $ A \times B \neq C$, $rank(\Delta)= r > 0$, and $dim(null(\Delta))=n-rank(\Delta)=n-r<n$. Here $dim(\cdot)$ denotes dimension and $null(\cdot)$ denotes the null space, i.e., 
$null (\Delta) = \{ x \in\mathbb{R}^{n}  : \Delta \times x=0 \}$.

We can now find $n - r$ of orthonormal vectors, $v_1, v_2, \cdots, v_{n-r}$, to form a basis for $null(\Delta)$, such that
$$null(\Delta)=span\{v_1,v_2,\cdots,v_{n-r} \},$$
and $r$ more orthonormal vectors, $v_{n-r+1},v_{n-r+2},\cdots,v_n$, such that
$$\mathbb{R}^{n}=span\{v_1,v_2,\cdots,v_{n-r},v_{n-r+1},v_{n-r+2},\cdots,v_n \}.$$

Any vector, and in particular the Gaussian vector, $\omega_{G}$ can be written in this basis as
$$\omega_G=\sum_{i=1}^{n}{\delta_i v_i},$$
where $\delta_i$ are the weights in this particular orthonormal coordinate system. If we denote $V= \left[ v_1, v_2, \cdots, v_{n-r}, v_{n-r+1}, v_{n-r+2}, \cdots,v_n \right]$ , we have
$$V\omega_G=\left[\delta_1,\delta_2,\cdots,\delta_{n-r},\delta_{n-r+1},\delta_{n-r+2},\cdots,\delta_n \right].$$
$C\omega_G=AB\omega_G$ holds in Algorithm 2 only if $A(B\omega_G)-C\omega_G=(AB-C) \omega_G=\Delta\omega_G=0$. This means that $\omega_G\in null(\Delta)$, i.e., $\delta_{n-r+1} =0, \delta_{n-r+2} = 0,\cdots,\delta_n=0$.
Due to the fact that $\omega_G$ is a Gaussian random vector and $V$ is an orthogonal matrix, Proposition \ref{proposition1} tells us that the elements, $\delta_i$, in the resulting vector $V\omega_G$ are normally distributed and statistically independent. With a continuous probability distribution, the discrete event where $\delta_i=0$ for all $i>n-r$ occurs on a set of measure zero and we will say here that it has probability zero. Hence, GVFA using a Gaussian random projection will have unmatched $C\omega_G$ and $AB\omega_G$ when $A \times B \neq C$ on all but a set of measure zero of Gaussian vectors, which we will say is probability one. \qed
\end{proof}

This argument in Theorem \ref{theorem1} is rather direct, but we must point out that the arguments are true when the computations are exact. In next subsection, we will analyze GVFA when float-point errors are present.

\subsection{Practical Use in Floating-Point Matrix Product Verification}
\label{subsection4.3}
In computer implementations of arithmetic with real numbers, one commonly uses floating-point numbers and floating-point arithmetic.  Floating-point numbers are represented as finite numbers in the sense that they have a fixed mantissa and exponent size in number of bits. Therefore, there will be a small probability, $p$, that $C\omega_G=AB\omega_G$ still holds due to unfortunate floating-point operations in a system with a known machine epsilon, $\epsilon$, when $A \times B \neq C$. The value of $p$ depends on the magnitude of the error between $A \times B$ and $C$ as well as $\epsilon$, whose upper bound is justified in Theorem \ref{theorem2}.

\vspace*{\baselineskip}

\begin{theorem}
\label{theorem2}
Assume that $\omega_G$ is a standard Gaussian random vector, whose elements are i.i.d.~normal variables with mean $0$ and variance $1$, i.e.,~the standard normal. Let $\Delta=A \times B-C$, then the probability, $p$, that $C\omega_G=AB\omega_G$ holds in Algorithm 2 using a standard Gaussian random vector $\omega_G$ under floating-point uncertainty of size $\epsilon$ is
$$p\leq 2 \Phi \left( \left|\frac{\epsilon}{\widetilde{\sigma}} \right| \right)-1,$$ where $\Phi(\cdot)$ is the cumulative density function of the standard normal, and $\widetilde{\sigma}$ is a constant only related to $\Delta$.
\end{theorem}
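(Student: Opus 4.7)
The plan is to exploit the Gaussianity of $\omega_G$ to describe the distribution of the residual vector $\Delta\omega_G = (AB-C)\omega_G$ and then translate an ``equal up to $\epsilon$'' event into a probability on a scalar Gaussian. Under floating-point uncertainty, the equality $C\omega_G=AB\omega_G$ cannot be ruled out only when every coordinate of $\Delta\omega_G$ lies within $\epsilon$ of zero, so the first step I would take is to fix the interpretation
$$ p \;=\; P\!\left( |(\Delta\omega_G)_j| \le \epsilon \text{ for all } j=1,\dots,n \right). $$

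Next, I would observe that because $\omega_G$ is a standard Gaussian vector, each coordinate of the residual is itself a one-dimensional Gaussian: specifically, $(\Delta\omega_G)_j = \langle \Delta_j, \omega_G\rangle \sim N(0,\|\Delta_j\|^2)$, where $\Delta_j$ denotes the $j$-th row of $\Delta$. The joint event above is contained in any single marginal event, so for every fixed index $j$ one has the pointwise bound
$$ p \;\le\; P\!\left(|(\Delta\omega_G)_j|\le \epsilon\right) \;=\; 2\Phi\!\left(\tfrac{\epsilon}{\|\Delta_j\|}\right)-1.$$
Choosing $j$ to be the index that minimizes the right-hand side, I would set $\widetilde\sigma := \max_{j}\|\Delta_j\|$, which is well defined and strictly positive because $\Delta \neq 0$, and this yields the claimed inequality $p \le 2\Phi(|\epsilon/\widetilde\sigma|)-1$. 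Since $\widetilde\sigma$ is built purely from $\Delta$, it satisfies the theorem's requirement of being a constant only related to $\Delta$.

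The main obstacle I anticipate is not any deep probabilistic difficulty but rather pinning down the ``right'' precise interpretation of floating-point equality and, relatedly, the ``right'' choice of $\widetilde\sigma$ so that the bound is as tight as possible while remaining honest. The coordinatewise formulation above is clean and produces $\widetilde\sigma$ as the largest row $2$-norm of $\Delta$; however, one could instead measure the residual in the $2$-norm via the SVD $\Delta = U\Sigma V^{\!T}$ and work with $\|\Delta\omega_G\|_2^2 = \sum_{i=1}^{r}\sigma_i^2 y_i^2$ for $y = V^{\!T}\omega_G \sim N(0,I)$, in which case the same style of bound goes through with $\widetilde\sigma$ taken to be the largest singular value $\sigma_1$ of $\Delta$ (dropping all but one term in the sum before invoking the $N(0,1)$ CDF). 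Either route reduces the multivariate question to a single normal and gives the stated cumulative-density bound, so the write-up really hinges on making the floating-point assumption explicit and then applying the standard normal CDF once.
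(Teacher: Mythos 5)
Your proposal is correct and follows essentially the same route as the paper: interpret the floating-point equality as $|(\Delta\omega_G)_i|\le\epsilon$ for all $i$, note each coordinate is $N(0,\sum_j\Delta_{ij}^2)$, bound the joint event by a single marginal, and take $\widetilde{\sigma}$ to be the maximal row $2$-norm of $\Delta$. Your justification for using the maximum (picking the marginal that minimizes the right-hand side) is in fact slightly cleaner than the paper's phrasing, which reaches the same conclusion via an iterated conditional-probability argument.
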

\begin{proof}
Since $A \times B \neq C$,   $\Delta=A \times B-C \neq 0$.
Consider the $i$th element, $g_i$, of the product vector $g=\Delta \times \omega_G$, we have
$$g_i=(\Delta \times \omega_G )_i=\sum_{j=1}^n{\Delta_{ij} (\omega_G )_j}.$$
Given $\epsilon$, only if $|g_i|\leq \epsilon$ for all $i=1,\cdots,n,$ can $C\omega_G=AB\omega_G$ hold. Since $\omega_G$ is a standard normal random vector, the $g_i$ for all $i=1,\cdots,n$, are normally distributed as well.  This is because they are linear combinations of normals themselves.  The key is to compute what the mean and variance is of the $g_i$.  The components of the $\omega_G$ are i.i.d.~standard normals. Thus we have that $E \left[ (\omega_G)_j \right] =0$ and $E \left[ (\omega_G)_j^2 \right] = 1$, for all $j=1,\cdots,n.$  Also, we have that $E \left[ (\omega_G )_i (\omega_G )_j \right]=0$ when $i \ne j$.  This allows us to compute the mean:
$$E(g_i )=E \left[ \sum_{j=1}^n{\Delta_{ij} (\omega_G )_j}\right]= \sum_{j=1}^n{\Delta_{ij} E \left[ (\omega_G )_j \right]}=0,$$
and the second moment about the mean, i.e.,~the variance:
\begin{eqnarray*}
E \left[g_i^2 - E(g_i)^2 \right] &=& E \left[ g_i^2 \right]
= E \left[\sum_{j=1}^n{\Delta_{ij} (\omega_G )_j}\right]^2 \\
&=& E \left[\sum_{j=1}^n \Delta_{ij}^2 \times 1 \right]
= \sum_{j=1}^n \Delta_{ij}^2.
\end{eqnarray*}
So we have that the $g_i$'s are normally distributed with mean zero and variance $\widetilde{\sigma}_{i}^{2}=\sum_{j=1}^n \Delta_{ij}^2$, i.e.,~$g_i \sim N \left(0,\widetilde{\sigma}_{i}^{2} \right)$.

Then, the probability that $|g_i|\leq\epsilon$  can be computed as follows.  Since $g_i\sim N\left( 0,\widetilde{\sigma}_{i}^{2}\right)$, we know that $\frac{g_i}{\widetilde{\sigma}_{i}^{2}} \sim N(0,1)$, and so we define the new variables  $\widetilde{g}_{i}=  \frac{g_i}{\widetilde{\sigma}_{i}^{2}} $ and $\widetilde{\epsilon}  = \frac{\epsilon}{\widetilde{\sigma}_{i}^{2}}$, and so we have
\begin{eqnarray*}
p\left(|g_i|\leq \epsilon \right) &=& p\left(-\epsilon\leq g_i \leq\epsilon \right) \\
&=& p \left( -\widetilde{\epsilon}  \leq  \widetilde{g}_{i}  \leq \widetilde{\epsilon} \right) \\
&=& \int_{- \widetilde{\epsilon}}^{\widetilde{\epsilon}} \frac{1}{\sqrt{2\pi}} e^{-\frac{1}{2} t^2}dt \\
&=& \Phi(\widetilde{\epsilon} )-\Phi(-\widetilde{\epsilon} ).
\end{eqnarray*}
Since the probability density function of a standard normal is an even function, we have that $\Phi(\widetilde{\epsilon} )+\Phi(-\widetilde{\epsilon} )=1$, and so we can use $- \Phi(-\widetilde{\epsilon} )=\Phi(\widetilde{\epsilon} )-1$ to get:
$$p(-\epsilon\leq g_i\leq \epsilon) =2\Phi(\widetilde{\epsilon} )-1=2\Phi\left(\left| \frac{\epsilon}{\widetilde{\sigma}_i} \right|\right)-1.$$

Now let us consider computing an upper bound on $p(|g_i |\leq\epsilon ,i=1,\cdots,n)$.  We have proven that the $g_i$'s are normal random variables, but they are not necessarily independent.  And so for this we use some simple ideas from conditional probability.  By example, consider
$$
p(|g_1| \leq \epsilon \mbox{ and } |g_2| \leq \epsilon) =
p(|g_2| \leq \epsilon \mbox{ given } |g_1| \leq \epsilon)p(|g_1| \leq \epsilon) \leq p(|g_1| \leq \epsilon).
$$
The inequality holds due to the fact that the probabilities are numbers less than one.  Now consider our goal of bounding
$$
p(|g_i|\leq\epsilon ,i=1,\cdots,n) \leq p(|g_1| \leq \epsilon) = \left[2\Phi\left(\left| \frac{\epsilon}{\widetilde{\sigma}_1} \right|\right)-1 \right],
$$
by iterating the conditional probability argument $n$ times.  By reordering we could have chosen the bound utilizing any of the $g_i$'s.  However, let us define $\widetilde{\sigma}=\max_i \sqrt{\sum_{j=1}^n {\Delta_{ij}^{2}}} $, i.e.,~the maximal standard deviation over all the $g_i$'s, which is only related to the matrix $\Delta$.  We can use that value instead to get
$$
p=p(|g_i|\leq\epsilon ,i=1,\cdots,n) \leq  \left[2\Phi\left(\left| \frac{\epsilon}{\widetilde{\sigma}} \right|\right)-1 \right].
$$ \qed
\end{proof}

As an interesting corollary, we can get a better bound in the case the at the $g_i$'s are independent.  In that case
%Since we have shown the $g_i$'s to be i.i.d.~normal random variables we can compute that,
%$$p(|g_1|\leq \epsilon,|g_2|\leq \epsilon )=p(|g_2 |\leq\epsilon \bigm\vert |g_1 |\leq\epsilon )p(|g_1 |\leq \epsilon )\leq p(|g_1 |\leq\epsilon ),$$
%it is easy to obtain,
%for any $k$,
$$p(|g_i |\leq\epsilon ,i=1,\cdots,n) = \prod_{i=1}^n p(|g_i |\leq\epsilon ) = \prod_{i=1}^n \left[2\Phi\left(\left| \frac{\epsilon}{\widetilde{\sigma}_i} \right|\right)-1 \right].$$
%=2\Phi(\left| \frac{\epsilon}{\widetilde{\sigma}_k} \right|)-1.$$
Let $\widetilde{\sigma}=\max_i \sqrt{\sum_{j=1}^n{\Delta_{ij}^{2}}}$, i.e.,~the maximal standard deviation over all the $g_i$'s, which is only related to the matrix $\Delta$. Hence for all $i=1,\cdots,n,$ we have that $$2\Phi\left(\left| \frac{\epsilon}{\widetilde{\sigma_i}} \right|\right)-1 \le 2\Phi\left(\left| \frac{\epsilon}{\widetilde{\sigma}} \right|\right)-1.$$
And so, finally we get that

\begin{eqnarray*}
p&=&p(|g_i |\leq\epsilon ,i=1,\cdots,n) \\
&=& \prod_{i=1}^n \left[ 2\Phi\left(\left| \frac{\epsilon}{\widetilde{\sigma}_i} \right|\right)-1 \right] \\
&\le&
\left[ 2\Phi\left(\left|\frac{\epsilon}{\widetilde{\sigma}} \right|\right)-1 \right]^n  \\
&\le& 2\Phi\left(\left|\frac{\epsilon}{\widetilde{\sigma}} \right|\right)-1.
\end{eqnarray*}
The last inequality is true since the number raised to the $n$th power is less than one.

Note, that independence gives probability of a false positive that is $n$ times smaller than in the general, dependent case.  The conclusion of this seems to be that the bound in the dependent case is overly pessimistic, and we suspect that in cases where the matrix $\Delta$ is very sparse, due to a very small number of errors, that we are in the independent $g_i$'s case or have very little dependence, and these more optimistic bounds reflect what happens, computationally.

Theorem \ref{theorem2} reveals two interesting facts about GVFA in term of practical floating-point matrix product verification:
\renewcommand\labelenumi{(\theenumi)}
\begin{enumerate}
\item The bigger the error caused by the fault, the higher the probability that it can be captured. $p$ is usually very small because the floating point bound, $\epsilon$, is very small.
\item Similar to the original Freivalds' algorithm, higher confidence can be obtained by iterating the algorithm multiple times. In fact, if we iterate $k$ times using independent Gaussian random vectors, the probability of false positive decreases exponentially as $p^k$. Actually, due to the fact that $p$ is usually very small, one or a very small number of iterations will produce verification with sufficiently high confidence.
\end{enumerate}

%One comment that should be made is that if we consider $\int_{-\widetilde{\epsilon}}^{\widetilde{\epsilon}}\frac{1}{\sqrt{2\pi}} e^{-\frac{1}{2} t^2} dt$ when $\widetilde{\epsilon}$ is small, we can easily approximate this. Since the integrand is at its maximum at zero, and is a very smooth function, analytic actually, this integral is approximately the value of the integrand at zero times the length of the integration interval, i.e.~$ \int_{-\widetilde{\epsilon}}^{\widetilde{\epsilon}} \frac{1}{\sqrt{2\pi}} e^{-\frac{1}{2} t^2} dt \sim 2\widetilde{\epsilon} \frac{1}{\sqrt{2\pi}} e^{-\frac{1}{2}}=\widetilde{\epsilon} \sqrt{\frac{2}{\pi}} e^{-\frac{1}{2}}$.  This is justified as $\widetilde{\epsilon}$ is a number on the order of the machine epsilon, which is $2^{-23}$ in single precision or $2^{-52}$ in double precision floating point, divided by $\widetilde{\sigma}_{i}^{2}=\sum_{j=1}^n \Delta_{ij}^2$.

One comment that should be made is that if we consider $\int_{-\widetilde{\epsilon}}^{\widetilde{\epsilon}}\frac{1}{\sqrt{2\pi}} e^{-\frac{1}{2} t^2} dt$ when $\widetilde{\epsilon}$ is small, we can easily approximate this. Since the integrand is at its maximum at zero, and is a very smooth function, analytic actually, this integral is approximately the value of the integrand at zero times the length of the integration interval, i.e.,~ $\int_{-\widetilde{\epsilon} }^{\widetilde{\epsilon}} \frac{1}{\sqrt{2\pi}} e^{-\frac{1}{2} t^2} dt \leq 2\widetilde{\epsilon} \frac{1}{\sqrt{2\pi}} =\widetilde{\epsilon} \sqrt{\frac{2}{\pi}}$.  This is justified as $\widetilde{\epsilon}$ is a number on the order of the machine epsilon, which is $2^{-23}$ in single precision or $2^{-52}$ in double precision floating point, divided by $\widetilde{\sigma}_{i}^{2}=\sum_{j=1}^n \Delta_{ij}^2$.

Compared to deterministic methods, such as the Huang-Abraham scheme, GVFA has the following advantages:
\renewcommand\labelenumi{(\theenumi)}
\begin{enumerate}
\item Certain fault patterns, as shown in Section \ref{section2}, are undetectable in deterministic methods such as the Huang-Abraham scheme. Deterministic methods absolutely cannot detect faults with certain patterns, i.e.,~certain patterns are detected with probability zero. In contrast, there are no fault patterns that are undetectable by GVFA with 100\% probability. Moreover, iterating the algorithm multiple times can increase the probability of detecting any fault pattern any value less than one by iteration.
\item From the computational point-of-view, normal random vectors are generated independently of $A$, $B$, and $C$, which avoids the costly computation of checksums.
\end{enumerate}

\subsection{Huang-Abraham-like GVFA}
\label{subsection4.4}
GVFA can also be implemented in a way similar to that of the Huang-Abraham scheme by providing row and column verification, as shown in Algorithm 3.

\begin{table}[!ht]
\label{tab:alg3}
\begin{center}
\begin{tabular}{|p{10cm}|}
\hline
\multicolumn{1}{|c|}{Algorithm 3: Huang-Abraham-like GVFA} \\

\item[1.]  Generate two $n$-dimensional Gaussian random vectors, $\omega_R$, a column vector, and $\omega_C$, a row vector, where they independent (but not necessarily identically) distributed normal random variables with finite mean and variance. \\

\item[2.]  Calculate the projection of $C$ on $\omega_R$ and $\omega_C$: $\omega_R C=\omega_R \times C$ and $C\omega_C=C \times \omega_C$. \\

\item[3.]  Calculate the projection of the product $A \times B$ on $\omega_R$ and $\omega_C$: $\omega_R AB=(\omega_R \times A) \times B$ and $AB\omega_C=A \times (B \times \omega_C)$. \\

\hline
\end{tabular}
\end{center}
\end{table}

Similar to the Huang-Abraham scheme, a mismatched element of the row vectors of $\omega_R C$ and $\omega_R AB$ as well as that of the column vectors of $C\omega_C$ and $AB\omega_C$ uniquely identify a faulty element in $C$. By considering floating-point errors, the false positive probability of identifying this fault becomes $p^2$, according to the analysis in Section \ref{subsection4.3}. However, the computational cost doubles with six matrix-vector multiplications and two vector comparisons.  This is essentially the same work as doing two independent iterations of the GVFA, and obtains the same bound.

\subsection{Implementation using Fused Multiply-Add Hardware}
\label{subsection4.5}
The Fused Multiply-Add (FMA) machine instruction performs one multiply operation and one add operation with a single rounding step \cite{hokenek1990second}.  This was implemented to enable potentially faster performance in calculating the floating-point accumulation of products, $a: =a+b \times c$. Recall that the GVFA employs three matrix-vector multiplications to project $A \times B$ and $C$ onto a normal random vector, which requires a sequence of product accumulations that cost $3n(2n-1)$ floating-point operations. Therefore, the performance of the GVFA can be potentially boosted on modern computing architectures that support the FMA. More importantly, due to a single rounding step used in the FMA instruction instead of two roundings within separate instructions, less loss of accuracy occurs when using the FMA instruction in calculating the accumulation of products \cite{boldo2011exact}.  This should further reduce the floating-point rounding errors that cause false positives.

\section{Discussion and Conclusions}
\label{section5}
In this paper, we extend Freivalds' algorithm, which we call the Gaussian variant of Freivalds' algorithm (GVFA), to the real domain by random projection using vectors whose coefficients are i.i.d. normal random variables. If $A \times B \neq C$, the probability that the resulting vectors match is zero using exact arithmetic. Considering the round-off errors in floating-point operations, the probability of fault detection depends on the magnitude of the error caused by the fault as well as the floating point precision. The new GVFA can be iterated $k$ times with the probability of false positives decreasing exponentially in $k$. In addition to matrix-matrix multiplication, the new algorithm can be applied to verify a wide variety of computations relevant to numerical linear algebra as it provides fault tolerance to the computation that defines level 3 of the BLAS. GVFA can also be used to enforce the trustworthiness of outsourcing matrix computations on untrusted distributed computing infrastructures such as clouds or volunteer peer-to-peer platforms \cite{lei2014cloud,kumar:hal-00876156}.

The GVFA can be easily extended to a more general matrix multiplication operation where $A$ is $m \times p$, $B$ is $p \times n$, and $C$ is $m \times n$. The overall computational time then becomes $O(mp+np)$. The algorithm can be further extended to verify the product of $N$ matrices, which requires overall $N+1$ matrix-vector multiplications.  The GVFA can also be applied to verifying a wide variety of matrix decomposition operations such as LU, QR, Cholesky, as well as eigenvalue computations, and singular value decompositions. In this case, faults are not in the product matrix but occur in the decomposed ones instead. Anyway, the GVFA can be directly applied with no modifications necessary.

The GVFA is a new tool to detect faults in numerical linear algebra, and since it is based on random Gaussian projection, it is related to the many new randomized algorithms being used directly in numerical linear algebra \cite{halko2011finding,drineas2006fast1,drineas2006fast2,drineas2006fast3,eriksson2011importance}.
The fundamental idea of these randomized algorithms is to apply efficient sampling on the potentially very large matrices to extract their important characteristics so as to fast approximate numerical linear algebra operations. We believe that the GVFA will be a very useful tool in the development of fault-tolerant and otherwise resilient algorithms for solving large numerical linear algebra problems.  In fact, it seems that the GVFA's similarity to other, new, stochastic techniques in numerical linear algebra affords the possibility of creating stochastic linear solvers that are by their very nature resilient and fault-tolerant.  This is highly relevant for new machines being developed in HPC to have maximal floating-point operations per second (FLOPS) while existing within restrictive energy budgets. These HPC systems will be operating at voltages lower than most current systems, and so they are expected to be particularly susceptible to soft errors. However, even if one is not anticipating the use of these high-end machines, the trend in processor design is to lower power, and is being driven by the explosion of mobile computing. Thus, the ability to reliably perform complicated numerical linear algebraic computations on systems more apt to experience soft faults is a very general concern. The GVFA will make it much easier to perform such computations with high fidelity in HPC, cloud computing, mobile applications, as well in big-data settings.

\begin{acknowledgements}
We would like to thank Dr. Stephan Olariu for his valuable  suggestions on the manuscript. This work is partially supported by National Science Foundation grant 1066471 for Yaohang Li and Hao Ji acknowledges support from an ODU Modeling and Simulation Fellowship.  Michael Mascagni's contribution to this paper was partially supported by National Institute of Standards and Technology (NIST) during his sabbatical.

The mention of any commercial product or service in this paper does not imply an endorsement by NIST or the Department of Commerce.
\end{acknowledgements}

\bibliographystyle{spmpsci}
\bibliography{mybibfile}

% BibTeX users please use one of
%\bibliographystyle{spbasic}      % basic style, author-year citations
%\bibliographystyle{spmpsci}      % mathematics and physical sciences
%\bibliographystyle{spphys}       % APS-like style for physics
%\bibliography{}   % name your BibTeX data base

\end{document}